\DeclareMathAlphabet{\pazocal}{OMS}{zplm}{m}{n}
\newtheorem{theorem}{Theorem}
\newtheorem{lemma}{Lemma}
\newtheorem{corollary}{Corollary}
\newtheorem{definition}{Definition}
\newtheorem{proposition}{Proposition}
\newtheorem{example}{Example}
\DeclareMathOperator{\Tr}{Tr}
\DeclareMathOperator{\supp}{supp}
\newcommand{\ketbra}[2]{\ket{#1}\!\bra{#2}}
\begin{document}
\title{Entanglement witness and nonlocality in confidence of measurement\\
from multipartite quantum state discrimination}
\author{Donghoon Ha}
\affiliation{Department of Applied Mathematics and Institute of Natural Sciences, Kyung Hee University, Yongin 17104, Republic of Korea}
\author{Jeong San Kim}
\email{freddie1@khu.ac.kr}
\affiliation{Department of Applied Mathematics and Institute of Natural Sciences, Kyung Hee University, Yongin 17104, Republic of Korea}
\begin{abstract}
We consider multipartite quantum state discrimination and provide a specific relation between the properties of entanglement witness and quantum nonlocality inherent in the confidence of measurement. 
We first provide the definition of the confidence of measurement as well as its useful properties for various types of multipartite measurement.
We show that globally-maximum confidence that cannot be achieved by local operations and classical communication strongly depends on the existence of a certain type of entanglement witness.
We also provide conditions for an upper bound on maximum of locally-achievable confidence. 
Finally, we establish a method in terms of entanglement witness to construct quantum state ensembles with nonlocal maximum confidence.
\end{abstract}
\maketitle
\section{Introduction}\label{sec:int}
Quantum nonlocality is an important feature of multipartite quantum systems without any classical counterpart\cite{horo2009,chid2013,brun2014}.
In discriminating multipartite quantum states,
nonlocal phenomenon occurs when a globally possible discrimination strategy cannot be realized only by \emph{local operations and classical communication}(LOCC)\cite{chit2014}.
The first nonlocality of quantum state discrimination was shown through orthogonal quantum states with local indistinguishability\cite{benn1999,ghos2001,walg2002}.
In general, orthogonal quantum states can be perfectly discriminated by using an appropriate measurement, whereas it is not true for nonorthogonal quantum states\cite{chef2000,berg2007,barn20091,bae2015}.
Nonlocality of quantum state discrimination can also occurs in discriminating nonorthogonal quantum states;
there exist some nonorthogonal quantum states where the globally-optimal discrimination cannot be realized using only LOCC measurements\cite{hels1969,pere1991,duan2007,chit2013}.

The phenomenon of nonlocality also arises in the correlation distributed in a multipartite quantum system.
Quantum entanglement is a nonlocal correlation that cannot be created only by LOCC\cite{horo2009}.
The nonlocal nature of entanglement is a useful resource in various quantum information tasks such as quantum cryptography and quantum teleportation\cite{eker1991,benn1993,chit2019}.
Thus, it is important and even necessary to detect the presence of entanglement inherent in multipartite quantum states.
\emph{Entanglement witness}(EW) is an entanglement-detecting observable providing a negative expectation value for some entangled states, whereas its expectation value is nonnegative for all separable states\cite{horo1996,terh2000,lewe2000,chru2014}.
Recently, it was shown that quantum nonlocality arising in multipartite quantum state discrimination is closely related to the existence of EW\cite{ha20231,ha20232};
in multipartite quantum state discrimination, the possible difference between globally-optimal and locally-optimal success probabilities can be directly verified by using the existence of EW.
These results establish possible relationship between different types of nonlocality from various quantum phenomena.

Here, we consider multipartite quantum state discrimination and provide a specific relation between the properties of EW and quantum nonlocality inherent in the confidence of measurement. 
We first provide the definition of the confidence of measurement as well as its useful properties for various types of multipartite measurement.
We show that globally-maximum confidence that cannot be achieved by LOCC meausurements strongly depends on the existence of a certain type of EW.
We also provide conditions for an upper bound on maximum of locally-achievable confidence. 
Finally, we establish a method in terms of EW to construct quantum state ensembles with nonlocal maximum confidence.

\section{Results}\label{sec:res}
\subsection{Confidence of measurement}
Let us consider the situation of discriminating the quantum states from the ensemble,
\begin{equation}\label{eq:ens}
\mathscr{E}=\{\eta_{i},\rho_{i}\}_{i=1}^{n},
\end{equation}
where the state $\rho_{i}$ is prepared with the \emph{nonzero} probability $\eta_{i}$ for each $i=1,\ldots,n$.
The state of a quantum system prepared from $\mathscr{E}$ is denoted by $\rho_{0}$, that is,
\begin{equation}\label{eq:ads}
\rho_{0}=\sum_{i=1}^{n}\eta_{i}\rho_{i}.
\end{equation}
We further consider the discrimination of the quantum state ensemble $\mathscr{E}$ in Eq.~\eqref{eq:ens} using a measurement $\mathscr{M}=\{M_{i}\}_{i=0}^{n}$. 
For each $i=1,\ldots,n$, we guess the prepared state to be $\rho_{i}$ if the measurement result is $M_{i}$.
On the other hand, the measurement result is inconclusive
when we obtain $M_{0}$.
The definitions and properties about separable operator and EW are provided in the ``\hyperref[sec:mtd]{Methods}'' Section. 

\begin{definition}
For a quantum state ensemble $\mathscr{E}=\{\eta_{i},\rho_{i}\}_{i=1}^{n}$ and a measurement $\mathscr{M}=\{M_{i}\}_{i=0}^{n}$, the \emph{confidence} of $\mathscr{M}$ to identify $\rho_{j}$ is the conditional probability 
\begin{equation}\label{eq:cfd}
\Pr(\rho_{j}|M_{j})=\frac{\eta_{j}\Tr(\rho_{j}M_{j})}{\Tr(\rho_{0}M_{j})},
\end{equation}
which is well defined only when
\begin{equation}\label{eq:wdc}
\Tr(\rho_{0} M_{j})\neq 0.
\end{equation}
For each $j=1,\ldots,n$, the \emph{maximum confidence} of measurements to identify $\rho_{j}$ is 
\begin{equation}\label{eq:cje}
\pazocal{C}_{j}(\mathscr{E})=\max_{\substack{\mathscr{M}\,\mathrm{with}\\
\Tr(\rho_{0} M_{j})\neq 0}}\Pr(\rho_{j}|M_{j}),
\end{equation}
where the maximum is taken over all possible measurements $\mathscr{M}=\{M_{i}\}_{i=0}^{n}$ with Eq.~\eqref{eq:wdc}.
\end{definition}

Here, we note that the confidence $\Pr(\rho_{j}|M_{j})$
in Eq.~\eqref{eq:cfd} is the probability that the prepared state is $\rho_{j}$ when the measurement result is $M_{j}$.
We further note that the maximum confidence $\pazocal{C}_{j}(\mathscr{E})$ in Eq.~\eqref{eq:cje} is already known as the largest eigenvalue of $\sqrt{\rho_{0}}^{-1}\eta_{j}\rho_{j}\sqrt{\rho_{0}}^{-1}$ where $\sqrt{E}$ is the positive square root of $E\in\mathbb{H}_{+}$, and $F^{-1}$ is the inverse of the Hermitian operator $F$ on its support\cite{crok2006,supp}.
From this fact, we can easily verify that
\begin{eqnarray}
\pazocal{C}_{j}(\mathscr{E})&=&\min\{q\in\mathbb{R}\,|\,q\mathbbm{1}-\sqrt{\rho_{0}}^{-1}\eta_{j}\rho_{j}\sqrt{\rho_{0}}^{-1}\in\mathbb{H}_{+}\}\nonumber\\
&=&\min\{q\in\mathbb{R}\,|\,q\rho_{0}-\eta_{j}\rho_{j}\in\mathbb{H}_{+}\},~j=1,\ldots,n,\label{eq:cjeq}
\end{eqnarray}
which can also be obtained by considering Eq.~\eqref{eq:cje} as a semidefinite program\cite{flat2022,lee2022,kenb2022}.

When the available measurements are restricted to separable measurements, we denote the maximum of achievable confidence by 
\begin{equation}\label{eq:sje}
\pazocal{S}_{j}(\mathscr{E})=\max_{\substack{\mathrm{Separable}\,\mathscr{M}\\
\mathrm{with}\, \Tr(\rho_{0} M_{j})\neq 0}}\Pr(\rho_{j}|M_{j}),~~j=1,\ldots,n.
\end{equation}
Similarly, we denote the maximum of confidence achievable using LOCC measurements as
\begin{equation}\label{eq:lje}
\pazocal{L}_{j}(\mathscr{E})=\max_{\substack{\mathrm{LOCC}\,\mathscr{M}\\
\mathrm{with}\, \Tr(\rho_{0} M_{j})\neq 0}}\Pr(\rho_{j}|M_{j}),~~j=1,\ldots,n.
\end{equation}
For each $j=1,\ldots,n$, it follows from the definitions of $\pazocal{C}_{j}(\mathscr{E})$, $\pazocal{S}_{j}(\mathscr{E})$ and $\pazocal{L}_{j}(\mathscr{E})$ that
\begin{equation}\label{eq:lsc}
0<\pazocal{L}_{j}(\mathscr{E})\leqslant 
\pazocal{S}_{j}(\mathscr{E})\leqslant
\pazocal{C}_{j}(\mathscr{E})\leqslant1.
\end{equation}
Here the first strict inequality is due to the positivity of $\Pr(\rho_{j}|M_{j})$ for a LOCC measurement $\{M_{i}\}_{i=0}^{n}$ where $M_{i}=\delta_{ij}\mathbbm{1}$ with the Kronecker delta $\delta_{ij}$.
The following theorem shows that the maximum of confidence achievable using separable measurements is also achievable by LOCC measurements.
\begin{theorem}\label{thm:lsm}
For a multipartite quantum state ensemble $\mathscr{E}=\{\eta_{i},\rho_{i}\}_{i=1}^{n}$
and each $j=1,\ldots,n$, we have
\begin{equation}\label{eq:sjmm}
\pazocal{L}_{j}(\mathscr{E})=
\pazocal{S}_{j}(\mathscr{E})=\max_{\substack{M\in\mathbb{SEP}\\ \Tr(\rho_{0} M)=1}}
\eta_{j}\Tr(\rho_{j}M),
\end{equation}
where the maximum is taken over all possible separable operator $M$ with $\Tr(\rho_{0} M)=1$ and $\rho_{0}$ is defined in Eq.~\eqref{eq:ads}.
\end{theorem}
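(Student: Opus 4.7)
The plan is to sandwich the common value between the separable upper bound and the LOCC lower bound, using Eq.~\eqref{eq:lsc} as the outer inequality. Writing $F=\max_{M\in\mathbb{SEP},\,\Tr(\rho_0 M)=1}\eta_j\Tr(\rho_j M)$ for the right-hand side of Eq.~\eqref{eq:sjmm}, I would prove separately that $\pazocal{S}_j(\mathscr{E})\leqslant F$ and $\pazocal{L}_j(\mathscr{E})\geqslant F$, and then combine these with the already-established $\pazocal{L}_j(\mathscr{E})\leqslant\pazocal{S}_j(\mathscr{E})$ from Eq.~\eqref{eq:lsc} to force equality throughout.

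For the upper bound, I would use a simple rescaling argument: given any separable measurement $\mathscr{M}=\{M_i\}_{i=0}^n$ with $\Tr(\rho_0 M_j)\neq 0$, the operator $M:=M_j/\Tr(\rho_0 M_j)$ is a positive multiple of a separable operator, hence separable itself, and it satisfies $\Tr(\rho_0 M)=1$. The confidence then reads $\Pr(\rho_j|M_j)=\eta_j\Tr(\rho_j M)\leqslant F$, and maximizing over all admissible $\mathscr{M}$ yields $\pazocal{S}_j(\mathscr{E})\leqslant F$.

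For the lower bound, the task is to turn a prescribed feasible $M\in\mathbb{SEP}$ into the guess-$j$ element of a bona fide LOCC measurement, up to a positive rescaling that leaves the confidence invariant. I would write $M=\sum_k C_k\,B_k^{(1)}\otimes\cdots\otimes B_k^{(N)}$ with $C_k>0$ and $0\leqslant B_k^{(a)}\leqslant\mathbbm{1}$ (absorbing scalars into the $C_k$ so that each local factor has operator norm at most one), set $\alpha=1/\sum_k C_k$ so that $\alpha M\leqslant\mathbbm{1}$, and run the following protocol: party~$1$ classically samples $k$ with probability $\alpha C_k$ (otherwise declaring outcome $0$), performs the local measurement $\{B_k^{(1)},\mathbbm{1}-B_k^{(1)}\}$, and broadcasts $k$ and her result; on the ``$B$-branch'' party~$2$ performs $\{B_k^{(2)},\mathbbm{1}-B_k^{(2)}\}$ and broadcasts, and so on down the chain; the final outcome is declared $j$ only if every party sees the ``$B$-branch'', and $0$ otherwise. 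A direct calculation gives a guess-$j$ POVM element equal to $\alpha M$ and an inconclusive element $\mathbbm{1}-\alpha M\geqslant 0$, so the measurement is a genuine LOCC POVM whose confidence is well defined and equals $\eta_j\Tr(\rho_j M)$; maximizing over feasible $M$ then gives $\pazocal{L}_j(\mathscr{E})\geqslant F$.

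The hard part is precisely the LOCC construction in the previous paragraph: one must explicitly reassemble the product terms of a separable decomposition of $M$ into a sequence of local measurements coordinated by one-way classical communication so that the effective guess-$j$ POVM element is a positive multiple of $M$, while the ``failure'' branches collect into a legitimate non-negative inconclusive outcome. The round-robin protocol with sampling probabilities proportional to the decomposition coefficients $C_k$, together with the normalization $\|B_k^{(a)}\|\leqslant 1$ of the local factors, is what makes both of these requirements compatible; once it is in place, the sandwich immediately closes the proof of $\pazocal{L}_j(\mathscr{E})=\pazocal{S}_j(\mathscr{E})=F$.
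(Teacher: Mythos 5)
Your proposal is correct and takes essentially the same approach as the paper: the upper bound by rescaling $M_{j}\mapsto M_{j}/\Tr(\rho_{0}M_{j})$, and the lower bound by normalizing the optimal separable operator and realizing the two-outcome measurement $\{\alpha M,\mathbbm{1}-\alpha M\}$ via classical sampling of the product terms followed by local two-outcome measurements, which is exactly the content of the paper's Lemma~\ref{lem:tos}. The only cosmetic difference is your choice of normalization ($\alpha=1/\sum_{k}C_{k}$ with operator-norm-bounded local factors, versus the paper's $1/\Tr M^{\star}$ with trace-one local states), which changes nothing essential.
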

\begin{proof}
For any separable measurement $\{M_{i}\}_{i=0}^{n}$ giving $\pazocal{S}_{j}(\mathscr{E})$, we have
\begin{equation}\label{eq:sjej}
\pazocal{L}_{j}(\mathscr{E})\leqslant
\pazocal{S}_{j}(\mathscr{E})=\frac{\eta_{j}\Tr(\rho_{j}M_{j})}{\Tr(\rho_{0} M_{j})}
=\eta_{j}\Tr(\rho_{j}M')
\leqslant\max_{\substack{M\in\mathbb{SEP}\\ \Tr(\rho_{0} M)=1}}
\eta_{j}\Tr(\rho_{j}M),
\end{equation}
where the first inequality is from Inequality~\eqref{eq:lsc},
the second equality is by letting $M'=M_{j}/\Tr(\rho_{0} M_{j})$,
and the last inequality holds due to 
$M'\in\mathbb{SEP}$ and $\Tr(\rho_{0} M')=1$.
Therefore, to prove Eq.~\eqref{eq:sjmm}, it is enough to show that
\begin{equation}\label{eq:sjed}
\max_{\substack{M\in\mathbb{SEP}\\ \Tr(\rho_{0} M)=1}}
\eta_{j}\Tr(\rho_{j}M)\leqslant \pazocal{L}_{j}(\mathscr{E}).
\end{equation}

Let $M^{\star}$ be an optimal operator realizing the left-hand side of Inequality~\eqref{eq:sjed}.
We first note that $\Tr M^{\star}$ is positive because
\begin{equation}\label{eq:tmej}
\Tr M^{\star}\geqslant \eta_{j}\Tr(\rho_{j} M^{\star})\geqslant\eta_{j}\Tr(\rho_{j}\mathbbm{1})=\eta_{j}>0,
\end{equation}
where the first inequality is from $\mathbbm{1}-\eta_{j}\rho_{j}\in\mathbb{H}_{+}$ and the second inequality holds due to the optimality of $M^{\star}$ among all separable operators $M$ satisfying $\Tr(\rho_{0}M)=1$, whereas $\mathbbm{1}$ is one of such operators.
From the positivity of $\Tr M^{\star}$, we can consider the measurement $\mathscr{M}=\{M_{i}\}_{i=0}^{n}$ consisting of
\begin{equation}\label{eq:smaw}
M_{i}=
\left\{
\begin{array}{ccc}
M^{\star}/\Tr M^{\star},&i=j,\\
\mathbbm{1}-M^{\star}/\Tr M^{\star},&i=0,\\
\mathbb{O},&\mathrm{otherwise.}
\end{array}
\right.
\end{equation}
Since $M^{\star}/\Tr M^{\star}$ is a separable state in $\mathbb{H}$, it follows from Lemma~\ref{lem:tos} that $\mathscr{M}$ is a LOCC measurement.
Therefore, we have
\begin{equation}\label{eq:sjin}
\max_{\substack{M\in\mathbb{SEP}\\ \Tr(\rho_{0} M)=1}}\eta_{j}\Tr(\rho_{j}M)
=\eta_{j}\Tr(\rho_{j}M^{\star})
=\frac{\eta_{j}\Tr(\rho_{j}M^{\star})}{\Tr(\rho_{0} M^{\star})}
=\frac{\eta_{j}\Tr(\rho_{j}M_{j})}{\Tr(\rho_{0} M_{j})}
\leqslant\pazocal{L}_{j}(\mathscr{E}),
\end{equation}
where the first and second equalities are from the definition of $M^{\star}$,
the last equality is by Eq.~\eqref{eq:smaw},
and the inequality holds from the definition of $\pazocal{L}_{j}(\mathscr{E})$ in Eq.~\eqref{eq:lje}.
\end{proof}

Now, let us consider the minimum quantities
\begin{equation}\label{eq:tsie}
\pazocal{Q}_{j}(\mathscr{E})=\min_{q\in\mathbb{R}_{j}(\mathscr{E})} q,~~j=1,\ldots,n,
\end{equation}
where
\begin{equation}\label{eq:rie}
\mathbb{R}_{j}(\mathscr{E})=\{q\in\mathbb{R}\,|\,q\rho_{0}-\eta_{j}\rho_{j}\in\mathbb{SEP}^{*}\}.
\end{equation}
Each $\pazocal{Q}_{j}(\mathscr{E})$ in Eq.~\eqref{eq:tsie} is
an upper bound of $\pazocal{S}_{j}(\mathscr{E})$ in Eq.~\eqref{eq:lje} because
\begin{equation}\label{eq:stsi}
\pazocal{S}_{j}(\mathscr{E})=\max_{\substack{M\in\mathbb{SEP}\\ \Tr(\rho_{0} M)=1}}\eta_{j}\Tr(\rho_{j}M)
\leqslant\max_{\substack{M\in\mathbb{SEP}\\ \Tr(\rho_{0} M)=1}}\Big(
\eta_{j}\Tr(\rho_{j}M)+\Tr[(\pazocal{Q}_{j}(\mathscr{E})\rho_{0}-\eta_{j}\rho_{j})M]\Big)
=\pazocal{Q}_{j}(\mathscr{E}),
\end{equation}
where the first equality is from Theorem~\ref{thm:lsm} and the inequality is due to $\pazocal{Q}_{j}(\mathscr{E})\rho_{0}-\eta_{j}\rho_{j}\in\mathbb{SEP}^{*}$.

For an ensemble $\mathscr{E}=\{\eta_{i},\rho_{i}\}_{i=1}^{n}$ and each $j=1,\ldots,n$, the following theorem shows that $\pazocal{S}_{j}(\mathscr{E})$ is equal to $\pazocal{Q}_{j}(\mathscr{E})$.
The proof of Theorem~\ref{thm:sts} is provided in the ``\hyperref[sec:mtd]{Methods}'' Section.
\begin{theorem}\label{thm:sts}
For a multipartite quantum state ensemble $\mathscr{E}=\{\eta_{i},\rho_{i}\}_{i=1}^{n}$ and each $j=1,\ldots,n$, we have
\begin{equation}\label{eq:stsj}
\pazocal{S}_{j}(\mathscr{E})=\pazocal{Q}_{j}(\mathscr{E}).
\end{equation}
\end{theorem}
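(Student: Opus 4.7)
The plan is to prove the nontrivial direction, namely $\pazocal{Q}_{j}(\mathscr{E})\leqslant\pazocal{S}_{j}(\mathscr{E})$, since the reverse inequality is already established in Inequality~\eqref{eq:stsi}. Setting $p^{\star}:=\pazocal{S}_{j}(\mathscr{E})$, my goal would be to show that $p^{\star}\in\mathbb{R}_{j}(\mathscr{E})$, which by definition means that the operator $W:=p^{\star}\rho_{0}-\eta_{j}\rho_{j}$ lies in $\mathbb{SEP}^{*}$. By the defining property of the dual cone, this reduces to verifying the scalar inequality $\Tr(WM)\geqslant 0$ for every $M\in\mathbb{SEP}$.

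I would split according to the value of $\Tr(\rho_{0}M)$. When $\Tr(\rho_{0}M)>0$, I would normalize by setting $M'=M/\Tr(\rho_{0}M)\in\mathbb{SEP}$, which satisfies $\Tr(\rho_{0}M')=1$. The characterization of $\pazocal{S}_{j}(\mathscr{E})$ supplied by Theorem~\ref{thm:lsm} then gives $\eta_{j}\Tr(\rho_{j}M')\leqslant p^{\star}$, and multiplying through by the positive scalar $\Tr(\rho_{0}M)$ immediately yields $\Tr(WM)\geqslant 0$.

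The potentially delicate case is $\Tr(\rho_{0}M)=0$, and this is where I expect the main obstacle to lie. Since $\rho_{0}=\sum_{i}\eta_{i}\rho_{i}$ with all preparation probabilities strictly positive, the support of $\rho_{0}$ contains $\supp(\rho_{i})$ for every $i$, and in particular $\supp(\rho_{j})\subseteq\supp(\rho_{0})$. Combined with $\rho_{0},M\in\mathbb{H}_{+}$, the hypothesis $\Tr(\rho_{0}M)=0$ forces $\rho_{0}^{1/2}M\rho_{0}^{1/2}=\mathbb{O}$ and hence $M\rho_{0}=\mathbb{O}$, so $M$ annihilates $\supp(\rho_{0})\supseteq\supp(\rho_{j})$; therefore $\Tr(\rho_{j}M)=0$ as well, and $\Tr(WM)=0$ trivially.

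Combining the two cases shows $W\in\mathbb{SEP}^{*}$, hence $p^{\star}\in\mathbb{R}_{j}(\mathscr{E})$ and $\pazocal{Q}_{j}(\mathscr{E})\leqslant p^{\star}=\pazocal{S}_{j}(\mathscr{E})$, which together with Inequality~\eqref{eq:stsi} establishes Eq.~\eqref{eq:stsj}. Philosophically, this is simply a strong-duality statement between the primal cone program defining $\pazocal{S}_{j}$ and the dual program defining $\pazocal{Q}_{j}$; the scaling argument bypasses any need for Slater's condition, while the support-containment observation is the only place where the standing assumption $\eta_{i}>0$ for all $i$ is used in an essential way.
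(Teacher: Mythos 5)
Your proof is correct, and it takes a genuinely different and more elementary route than the paper's. The paper proves the nontrivial inequality $\pazocal{Q}_{j}(\mathscr{E})\leqslant\pazocal{S}_{j}(\mathscr{E})$ by embedding the problem in $\mathbb{R}^{2}\times\mathbb{H}^{\circ}$, applying the separating hyperplane theorem to the convex set $\Omega_{j}(\mathscr{E})$ built from the constraints, and then ruling out degenerate normal vectors ($\gamma_{2}<0$, $\gamma_{2}=0$) to extract a feasible $q=\gamma_{1}/\gamma_{2}$. You instead exhibit the dual certificate explicitly: the optimal value $p^{\star}=\pazocal{S}_{j}(\mathscr{E})$ itself lies in $\mathbb{R}_{j}(\mathscr{E})$. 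Your two-case verification is sound: when $\Tr(\rho_{0}M)>0$ the rescaled $M'=M/\Tr(\rho_{0}M)$ is admissible in Theorem~\ref{thm:lsm} because $\mathbb{SEP}$ is a cone, and when $\Tr(\rho_{0}M)=0$ the operator inequality $\eta_{j}\rho_{j}\leqslant\rho_{0}$ (equivalently, your support-containment observation, which uses $\eta_{j}>0$) forces $\eta_{j}\Tr(\rho_{j}M)=0$ as well, so $\Tr[(p^{\star}\rho_{0}-\eta_{j}\rho_{j})M]=0$. This is precisely the observation that a supremum of ratios of linear functionals over a cone is automatically the least $q$ for which $q$ times the denominator minus the numerator is nonnegative on that cone, so ``strong duality'' holds here without any Slater-type hypothesis. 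Both proofs lean on Theorem~\ref{thm:lsm}, which is established independently, so there is no circularity. What the paper's heavier separating-hyperplane machinery buys is a template that would survive in settings where the feasible set is not simply a cone sliced by one linear constraint; for this particular statement your direct argument is shorter and makes transparent that the only structural inputs are $\eta_{j}\rho_{j}\leqslant\rho_{0}$ and the cone property of $\mathbb{SEP}$.
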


For a multipartite quantum state ensemble $\mathscr{E}=\{\eta_{i},\rho_{i}\}_{i=1}^{n}$ and each $j=1,\ldots,n$, the following theorem provides a necessary and sufficient condition for $q\in\mathbb{R}_{j}(\mathscr{E})$ to give $\pazocal{Q}_{j}(\mathscr{E})$.
\begin{theorem}\label{thm:wps}
For a multipartite quantum state ensemble $\mathscr{E}=\{\eta_{i},\rho_{i}\}_{i=1}^{n}$ and $q\in\mathbb{R}_{j}(\mathscr{E})$ with $j\in\{1,\ldots,n\}$, 
\begin{equation}\label{eq:ljq}
q=\pazocal{Q}_{j}(\mathscr{E})
\end{equation}
if and only if there exists a separable state $\sigma$ satisfying
\begin{equation}\label{eq:omc}
\Tr[\sigma(q\rho_{0}-\eta_{j}\rho_{j})]=0,~
\Tr(\sigma\rho_{0})>0.
\end{equation}
\end{theorem}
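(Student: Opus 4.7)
The biconditional is the complementary-slackness statement for the primal/dual pair underlying Theorems~\ref{thm:lsm} and~\ref{thm:sts}: the hypothesis $q\in\mathbb{R}_{j}(\mathscr{E})$ is a dual feasibility condition, the existence of $\sigma$ converts (after rescaling) into a primal-feasible separable operator, and the equality $\Tr[\sigma(q\rho_{0}-\eta_{j}\rho_{j})]=0$ enforces that the dual bound is tight. The plan is to prove each direction by translating between $\sigma$ and a feasible operator $M$ for the maximization defining $\pazocal{S}_{j}(\mathscr{E})$, invoking Theorems~\ref{thm:lsm} and~\ref{thm:sts} to bridge $\pazocal{Q}_{j}(\mathscr{E})$, $\pazocal{S}_{j}(\mathscr{E})$, and $\pazocal{L}_{j}(\mathscr{E})$.

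\emph{Sufficiency.} Suppose a separable state $\sigma$ satisfies \eqref{eq:omc}. Because $\Tr(\sigma\rho_{0})>0$ and $\mathbb{SEP}$ is closed under positive rescaling, $M:=\sigma/\Tr(\sigma\rho_{0})$ lies in $\mathbb{SEP}$ and satisfies $\Tr(\rho_{0}M)=1$. The first part of \eqref{eq:omc} reads $\eta_{j}\Tr(\rho_{j}\sigma)=q\Tr(\rho_{0}\sigma)$, hence $\eta_{j}\Tr(\rho_{j}M)=q$. By Theorem~\ref{thm:lsm} this forces $\pazocal{S}_{j}(\mathscr{E})\geqslant q$. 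On the other hand, $q\in\mathbb{R}_{j}(\mathscr{E})$ gives $\pazocal{Q}_{j}(\mathscr{E})\leqslant q$ by definition \eqref{eq:tsie}, and Theorem~\ref{thm:sts} identifies $\pazocal{Q}_{j}(\mathscr{E})=\pazocal{S}_{j}(\mathscr{E})$, so $q=\pazocal{Q}_{j}(\mathscr{E})$.

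\emph{Necessity.} Conversely, assume $q=\pazocal{Q}_{j}(\mathscr{E})=\pazocal{S}_{j}(\mathscr{E})$ (the second equality by Theorem~\ref{thm:sts}). By Theorem~\ref{thm:lsm}, the maximum defining $\pazocal{S}_{j}(\mathscr{E})$ is attained at some $M^{\star}\in\mathbb{SEP}$ with $\Tr(\rho_{0}M^{\star})=1$ and $\eta_{j}\Tr(\rho_{j}M^{\star})=q$. The chain \eqref{eq:tmej} in the proof of Theorem~\ref{thm:lsm} gives $\Tr M^{\star}\geqslant\eta_{j}>0$, so I would set $\sigma:=M^{\star}/\Tr M^{\star}$, a separable state. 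Then $\Tr(\sigma\rho_{0})=1/\Tr M^{\star}>0$ and a direct calculation yields $\Tr[\sigma(q\rho_{0}-\eta_{j}\rho_{j})]=(q-\eta_{j}\Tr(\rho_{j}M^{\star}))/\Tr M^{\star}=0$, establishing \eqref{eq:omc}.

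The only subtlety is the attainment of the maximum used in the necessity direction, which relies on the compactness of $\{M\in\mathbb{SEP}:\Tr(\rho_{0}M)=1\}$ restricted to the support of $\rho_{0}$ together with continuity of the linear objective. This is already taken for granted in the proof of Theorem~\ref{thm:lsm}, so no genuinely new obstacle arises; the entire argument reduces to the two elementary rescalings $\sigma\mapsto\sigma/\Tr(\sigma\rho_{0})$ and $M^{\star}\mapsto M^{\star}/\Tr M^{\star}$ together with the identifications supplied by the preceding theorems.
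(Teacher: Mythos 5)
Your proof is correct and follows essentially the same route as the paper: sufficiency via the rescaling $M=\sigma/\Tr(\sigma\rho_{0})$ combined with Theorems~\ref{thm:lsm} and~\ref{thm:sts} and the definition of $\pazocal{Q}_{j}(\mathscr{E})$, and necessity via an optimal $M^{\star}$ rescaled to the state $\sigma=M^{\star}/\Tr M^{\star}$. The remark on attainment of the maximum is a fair observation, but the paper likewise takes it for granted, so nothing further is needed.
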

\begin{proof}
Let us first suppose that Eq.~\eqref{eq:ljq} is satisfied.
From Theorems~\ref{thm:lsm} and \ref{thm:sts} together with Eq.~\eqref{eq:ljq},  we have
\begin{equation}\label{eq:sjq}
\max_{\substack{M\in\mathbb{SEP}\\ \Tr(\rho_{0} M)=1}}
\eta_{j}\Tr(\rho_{j}M)=q.
\end{equation}
Equation~\eqref{eq:sjq} implies that there exists $M^{\star}\in\mathbb{SEP}$ with
\begin{equation}\label{eq:msc}
\Tr(\rho_{0}M^{\star})=1,~\eta_{j}\Tr(\rho_{j}M^{\star})=q.
\end{equation}
Thus, the separable state $\sigma=M^{\star}/\Tr M^{\star}$ satisfies
\begin{equation}\label{eq:sqe}
\Tr[\sigma(q\rho_{0}-\eta_{j}\rho_{j})]=\frac{q-q}{\Tr M^{\star}}=0,~~
\Tr(\sigma\rho_{0})=\frac{1}{\Tr M^{\star}}>0,
\end{equation}
where the inequality follows from $M^{\star}\in\mathbb{SEP}$ and $\Tr(\rho_{0}M^{\star})=1$.

Conversely, let us assume there exists a separable state $\sigma$ satisfying Eq.~\eqref{eq:omc}.
By letting $M=\sigma/\Tr(\sigma\rho_{0})\in\mathbb{SEP}$, we have
\begin{equation}\label{eq:qme}
q=\eta_{j}\Tr(\rho_{j}M)\leqslant\pazocal{S}_{j}(\mathscr{E})=\pazocal{Q}_{j}(\mathscr{E})\leqslant q,
\end{equation}
where the first equality is due to the equality in Eq.~\eqref{eq:omc}, 
the first inequality is by Theorem~\ref{thm:lsm}, 
the second equality is from Theorem~\ref{thm:sts}, and the second inequality follows from the definition of $\pazocal{Q}_{j}(\mathscr{E})$ in Eq.~\eqref{eq:tsie} together with $q\in\mathbb{R}_{j}(\mathscr{E})$.
Thus, Eq.~\eqref{eq:ljq} is satisfied.
\end{proof}

\subsection{Nonlocal maximum confidence}
For a multipartite quantum state ensemble $\mathscr{E}=\{\eta_{i},\rho_{i}\}_{i=1}^{n}$ and a measurement $\mathscr{M}=\{M_{i}\}_{i=0}^{n}$, we say that the confidence of $\mathscr{M}$ to identify $\rho_{j}$ is \emph{nonlocal} if it cannot be achieved by LOCC measurements, that is,
\begin{equation}\label{eq:nlc}
\pazocal{L}_{j}(\mathscr{E})<\Pr(\rho_{j}|M_{j}),
\end{equation}
where $\Pr(\rho_{j}|M_{j})$ and $\pazocal{L}_{j}(\mathscr{E})$ are defined in Eqs.~\eqref{eq:cfd} and \eqref{eq:lje}, respectively.
From this perspective, the maximum confidence of measurements to identify $\rho_{j}$ in Eq.~\eqref{eq:cje} is called \emph{nonlocal} if
\begin{equation}\label{eq:almc}
\pazocal{L}_{j}(\mathscr{E})<\pazocal{C}_{j}(\mathscr{E}).
\end{equation}
From Theorems~\ref{thm:lsm} and \ref{thm:sts}, we note that Inequality~\eqref{eq:almc} is equivalent to 
\begin{equation}\label{eq:smc}
\pazocal{Q}_{j}(\mathscr{E})<\pazocal{C}_{j}(\mathscr{E}).
\end{equation}
The following theorem provides a necessary and sufficient condition for Inequality~\eqref{eq:smc} in terms of EW.
\begin{theorem}\label{thm:lcj}
For a multipartite quantum state ensemble $\mathscr{E}=\{\eta_{i},\rho_{i}\}_{i=1}^{n}$, 
$q\in\mathbb{R}$ and each $j=1,\ldots,n$, 
\begin{equation}\label{eq:llcj}
\pazocal{Q}_{j}(\mathscr{E})\leqslant q<\pazocal{C}_{j}(\mathscr{E})
\end{equation}
if and only if $q\rho_{0}-\eta_{j}\rho_{j}$ is an EW.
\end{theorem}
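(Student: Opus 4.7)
The plan is to reduce the theorem to the two cone-theoretic formulas already on the table: $\pazocal{C}_{j}(\mathscr{E})=\min\{q\in\mathbb{R}\mid q\rho_{0}-\eta_{j}\rho_{j}\in\mathbb{H}_{+}\}$ from Eq.~\eqref{eq:cjeq}, and $\pazocal{Q}_{j}(\mathscr{E})=\min\{q\in\mathbb{R}\mid q\rho_{0}-\eta_{j}\rho_{j}\in\mathbb{SEP}^{*}\}$ from Eqs.~\eqref{eq:tsie}--\eqref{eq:rie}. Since an EW is by definition a Hermitian operator that lies in $\mathbb{SEP}^{*}$ but not in $\mathbb{H}_{+}$, the statement essentially asks us to recognize that $q\in[\pazocal{Q}_{j}(\mathscr{E}),\pazocal{C}_{j}(\mathscr{E}))$ is exactly the set of parameters for which $q\rho_{0}-\eta_{j}\rho_{j}$ sits in $\mathbb{SEP}^{*}\setminus\mathbb{H}_{+}$.

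The first step is to verify that the two feasible sets of parameters
\[
\mathbb{R}_{j}^{+}(\mathscr{E})=\{q\in\mathbb{R}\mid q\rho_{0}-\eta_{j}\rho_{j}\in\mathbb{H}_{+}\},\qquad \mathbb{R}_{j}(\mathscr{E})=\{q\in\mathbb{R}\mid q\rho_{0}-\eta_{j}\rho_{j}\in\mathbb{SEP}^{*}\}
\]
are upward-closed half-lines $[\pazocal{C}_{j}(\mathscr{E}),\infty)$ and $[\pazocal{Q}_{j}(\mathscr{E}),\infty)$, respectively. This follows from the observation that $\rho_{0}\in\mathbb{H}_{+}\subseteq\mathbb{SEP}^{*}$ (every positive semidefinite operator has nonnegative trace with every separable state), so $q\rho_{0}-\eta_{j}\rho_{j}\in\mathbb{H}_{+}$ (resp.\ $\in\mathbb{SEP}^{*}$) implies $q'\rho_{0}-\eta_{j}\rho_{j}=(q\rho_{0}-\eta_{j}\rho_{j})+(q'-q)\rho_{0}$ remains in the same cone for every $q'\geqslant q$, using closure of $\mathbb{H}_{+}$ and $\mathbb{SEP}^{*}$ under conic combinations. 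The existence of the minima, i.e.\ that these half-lines are closed, is immediate from the closedness of the two cones.

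Given this, the ``only if'' direction follows at once: if $\pazocal{Q}_{j}(\mathscr{E})\leqslant q<\pazocal{C}_{j}(\mathscr{E})$, then $q\in\mathbb{R}_{j}(\mathscr{E})$ gives $q\rho_{0}-\eta_{j}\rho_{j}\in\mathbb{SEP}^{*}$, while $q\notin\mathbb{R}_{j}^{+}(\mathscr{E})$ gives $q\rho_{0}-\eta_{j}\rho_{j}\notin\mathbb{H}_{+}$, so $q\rho_{0}-\eta_{j}\rho_{j}$ meets the definition of an EW. For the ``if'' direction, assuming $q\rho_{0}-\eta_{j}\rho_{j}$ is an EW yields the membership $q\in\mathbb{R}_{j}(\mathscr{E})\setminus\mathbb{R}_{j}^{+}(\mathscr{E})$, and by the half-line structure this is precisely $\pazocal{Q}_{j}(\mathscr{E})\leqslant q<\pazocal{C}_{j}(\mathscr{E})$.

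There is no real obstacle here beyond bookkeeping; the only subtle point is to be explicit that $\rho_{0}\in\mathbb{SEP}^{*}$ (used for the upward closure of $\mathbb{R}_{j}(\mathscr{E})$) and that EW membership in the paper's convention corresponds to $\mathbb{SEP}^{*}\setminus\mathbb{H}_{+}$ rather than some normalization of the witness. Once those definitional points are stated, the proof reduces to a single chain of equivalences.
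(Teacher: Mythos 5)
Your proof is correct and follows essentially the same route as the paper: both reduce the claim to the cone characterizations $\pazocal{C}_{j}(\mathscr{E})=\min\{q\mid q\rho_{0}-\eta_{j}\rho_{j}\in\mathbb{H}_{+}\}$ and $\pazocal{Q}_{j}(\mathscr{E})=\min\{q\mid q\rho_{0}-\eta_{j}\rho_{j}\in\mathbb{SEP}^{*}\}$ together with the definition of an EW as an element of $\mathbb{SEP}^{*}\setminus\mathbb{H}_{+}$. The only difference is that you make explicit the upward-closure (half-line) structure of the two feasible sets, which the paper uses implicitly when it passes from $q\geqslant\pazocal{Q}_{j}(\mathscr{E})$ to $q\rho_{0}-\eta_{j}\rho_{j}\in\mathbb{SEP}^{*}$; this is a harmless and indeed welcome clarification rather than a different argument.
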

\begin{proof}
Let us suppose that Inequality~\eqref{eq:llcj} holds.
From the definition of $\pazocal{Q}_{j}(\mathscr{E})$, we have
\begin{equation}
q\rho_{0}-\eta_{j}\rho_{j}\in\mathbb{SEP}^{*}.
\end{equation}
Furthermore, Eq.~\eqref{eq:cjeq} and Inequality~\eqref{eq:llcj} lead us to
\begin{equation}
q\rho_{0}-\eta_{j}\rho_{j}\notin\mathbb{H}_{+}.
\end{equation}
Thus, $q\rho_{0}-\eta_{j}\rho_{j}$ is an EW.

Conversely, let us assume that $q\rho_{0}-\eta_{j}\rho_{j}$ is an EW.
Due to the definition of $\pazocal{Q}_{j}(\mathscr{E})$, we have
\begin{equation}\label{eq:ssjs}
\pazocal{Q}_{j}(\mathscr{E})\leqslant q.
\end{equation}
Furthermore, it follows from $q\rho_{0}-\eta_{j}\rho_{j}\notin\mathbb{H}_{+}$ and Eq.~\eqref{eq:cjeq} that
\begin{equation}\label{eq:cqci}
q<\pazocal{C}_{j}(\mathscr{E}).
\end{equation}
Thus, Inequalities~\eqref{eq:ssjs} and \eqref{eq:cqci} lead us to Eq.~\eqref{eq:llcj}.
\end{proof}

For an ensemble $\mathscr{E}=\{\eta_{i},\rho_{i}\}_{i=1}^{n}$ with $\pazocal{Q}_{j}(\mathscr{E})<\pazocal{C}_{j}(\mathscr{E})$ for some $j\in\{1,\ldots,n\}$, 
it follows from Theorems~\ref{thm:wps} and \ref{thm:lcj} that a real number $q$ is equal to $\pazocal{Q}_{j}(\mathscr{E})$ only if $q\rho_{0}-\eta_{j}\rho_{j}$ is an EW satisfying the first condition in Eq.~\eqref{eq:omc} for some separable state $\sigma$, that is, a weakly-optimal EW.
Moreover, the converse is true when $\rho_{0}$ is full rank because any separable state $\sigma$ satisfies the second condition in Eq.~\eqref{eq:omc}.
However, weak optimality of $q\rho_{0}-\eta_{j}\rho_{j}$ does not generally mean that $q=\pazocal{Q}_{j}(\mathscr{E})$.
In other words, there exists an ensemble $\mathscr{E}=\{\eta_{i},\rho_{i}\}_{i=1}^{n}$ where
$q\rho_{0}-\eta_{j}\rho_{j}$ is a weakly-optimal EW for some $q\in\mathbb{R}$ and $j\in\{1,\ldots,n\}$ with
\begin{equation}\label{eq:qlle}
q>\pazocal{Q}_{j}(\mathscr{E})
\end{equation}
which is illustrated in the following example.
\begin{example}
Let us consider the qubit-qutrit state ensemble $\mathscr{E}=\{\eta_{i},\rho_{i}\}_{i=1}^{3}$ consisting of three states,
\begin{align}
\eta_{1}=\tfrac{1}{3},&~~\rho_{1}=\tfrac{1}{2}\ketbra{00}{00}
+\tfrac{1}{2}\ketbra{12}{12},\nonumber\\
\eta_{2}=\tfrac{1}{3},&~~\rho_{2}=\tfrac{1}{2}\ketbra{02}{02}
+\tfrac{1}{2}\ketbra{10}{10},\label{eq:cex}\\
\eta_{3}=\tfrac{1}{3},&~~\rho_{3}=\ketbra{\xi}{\xi},~\ket{\xi}=\tfrac{1}{\sqrt{2}}\ket{00}+\tfrac{1}{\sqrt{2}}\ket{12}.\nonumber
\end{align}
\end{example}

For $q\in\mathbb{R}$, it is straightforward to verify that
\begin{equation}\label{eq:qrmo}
q\rho_{0}-\eta_{1}\rho_{1}
=\tfrac{1}{6}\big[
(2q-1)\ketbra{00}{00}+(2q-1)\ketbra{12}{12}
+q\ketbra{00}{12}+q\ketbra{12}{00}
+q\ketbra{02}{02}+q\ketbra{10}{10}\big].
\end{equation}
From Eq.~\eqref{eq:qrmo}, we can easily see that $q\rho_{0}-\eta_{1}\rho_{1}\in\mathbb{H}_{+}$
if and only if $q\geqslant1$. 
Therefore, Eq.~\eqref{eq:cjeq} leads us to
\begin{equation}\label{eq:ecoo}
\pazocal{C}_{1}(\mathscr{E})=1.
\end{equation}

To obtain $\pazocal{Q}_{1}(\mathscr{E})$, we use the fact that a Hermitian operator is block positive if its partial transposition is positive semidefinite\cite{pere1996,pptp}.
If $q\geqslant\frac{1}{2}$, then $q\rho_{0}-\eta_{1}\rho_{1}$ in Eq.~\eqref{eq:qrmo} is block positive because its partial transposition is positive semidefinite.
On the other hand, Definition~\ref{def:bpos} implies that $q\rho_{0}-\eta_{1}\rho_{1}$ is not block positive for $q<\frac{1}{2}$ because
\begin{equation}\label{eq:too}
\bra{00}(q\rho_{0}-\eta_{1}\rho_{1})\ket{00}=2q-1<0.
\end{equation}
Thus, the definition of $\pazocal{Q}_{1}(\mathscr{E})$ in Eq.~\eqref{eq:tsie} leads us to
\begin{equation}\label{eq:leco}
\pazocal{Q}_{1}(\mathscr{E})=\tfrac{1}{2}.
\end{equation}
Moreover, $q\rho_{0}-\eta_{1}\rho_{1}$ is a weakly-optimal EW for all $q\in[\frac{1}{2},1)$
because it is in $\mathbb{SEP}^{*}\setminus\mathbb{H}_{+}$ and 
\begin{equation}\label{eq:etwo}
\bra{01}(q\rho_{0}-\eta_{1}\rho_{1})\ket{01}=
\bra{11}(q\rho_{0}-\eta_{1}\rho_{1})\ket{11}=0.
\end{equation}
For $\frac{1}{2}<q<1$, $q\rho_{0}-\eta_{1}\rho_{1}$ is a weakly-optimal EW, but $q\neq\pazocal{Q}_{1}(\mathscr{E})$.

For a two-qubit state ensemble $\mathscr{E}=\{\eta_{i},\rho_{i}\}_{i=1}^{n}$ and each $j=1,\ldots,n$, the following corollary show that a real number $q$ is equal to $\pazocal{Q}_{j}(\mathscr{E})$ if $q\rho_{0}-\eta_{j}\rho_{j}$ is a weakly-optimal EW. 
\begin{corollary}\label{cor:lwo}
For a two-qubit state ensemble $\mathscr{E}=\{\eta_{i},\rho_{i}\}_{i=1}^{n}$, $q\in\mathbb{R}$ and each $j\in\{1,\ldots,n\}$, 
\begin{equation}\label{eq:lpq}
\pazocal{Q}_{j}(\mathscr{E})=q<\pazocal{C}_{j}(\mathscr{E})
\end{equation}
if and only if $q\rho_{0}-\eta_{j}\rho_{j}$ is a weakly-optimal EW.
Moreover, $\pazocal{C}_{j}(\mathscr{E})$ is achievable locally
when $\rho_{0}$ is not full rank.
\end{corollary}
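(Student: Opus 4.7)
The corollary splits into two parts: the iff characterization of $\pazocal{Q}_j(\mathscr{E})=q<\pazocal{C}_j(\mathscr{E})$, and the ``moreover'' statement on local achievability of $\pazocal{C}_j(\mathscr{E})$. The plan is to derive the iff from Theorems~\ref{thm:wps} and \ref{thm:lcj} (routing the rank-deficient case of $\rho_0$ through the ``moreover'') and to prove the ``moreover'' by constructing an explicit two-outcome product measurement.

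For the iff, the ``only if'' direction is immediate: assuming $\pazocal{Q}_j(\mathscr{E})=q<\pazocal{C}_j(\mathscr{E})$, Theorem~\ref{thm:lcj} identifies $W:=q\rho_0-\eta_j\rho_j$ as an EW, and Theorem~\ref{thm:wps} produces a separable $\sigma$ with $\Tr[\sigma W]=0$, which is weak optimality. For the converse, Theorem~\ref{thm:lcj} gives $\pazocal{Q}_j(\mathscr{E})\leqslant q<\pazocal{C}_j(\mathscr{E})$ and weak optimality supplies a separable $\sigma$ with $\Tr[\sigma W]=0$; Theorem~\ref{thm:wps} then forces $q=\pazocal{Q}_j(\mathscr{E})$ as soon as $\Tr(\sigma\rho_0)>0$, which is automatic when $\rho_0$ is full rank. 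When $\rho_0$ is not full rank, the ``moreover'' together with Theorems~\ref{thm:lsm} and \ref{thm:sts} yields $\pazocal{C}_j(\mathscr{E})=\pazocal{Q}_j(\mathscr{E})$, incompatible with $q<\pazocal{C}_j(\mathscr{E})$; no weakly-optimal EW of the required form exists in this case, so the iff holds vacuously.

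For the ``moreover'', set $W_{\star}:=\pazocal{C}_j(\mathscr{E})\rho_0-\eta_j\rho_j\in\mathbb{H}_+$. Because $\supp\rho_j\subseteq\supp\rho_0$ one has $\ker W_\star\supseteq\ker\rho_0$, and the tightness of $\pazocal{C}_j(\mathscr{E})$ contributes at least one additional zero eigenvector of $W_\star$ inside $\supp\rho_0$, giving $\dim\ker W_\star\geqslant 5-\mathrm{rank}\,\rho_0\geqslant 2$ whenever $\rho_0$ is rank-deficient. My goal is to pick a product vector $\ket{a,b}\in\ker W_\star$ with $\langle a,b|\rho_0|a,b\rangle>0$; then Lemma~\ref{lem:tos} turns $\{\ketbra{a,b}{a,b},\mathbbm{1}-\ketbra{a,b}{a,b}\}$ into an LOCC measurement attaining $\pazocal{C}_j(\mathscr{E})$. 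Product vectors inside $\ker W_\star$ exist because in $\mathbb{C}^2\otimes\mathbb{C}^2$ the product vectors form the Segre quadric in $\mathbb{CP}^3$ and by B\'ezout every projective line in $\mathbb{CP}^3$ meets this quadric. When $\ker\rho_0$ contains no product vector, any product vector in $\ker W_\star$ automatically has positive overlap with $\rho_0$ and we are done.

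The main obstacle is the tangent configuration $\mathrm{rank}\,\rho_0=3$, $\dim\ker W_\star=2$, with $\ker\rho_0$ spanned by a product vector $\ket{a_0,b_0}$ and the projective line $\mathbb{P}(\ker W_\star)$ tangent to the Segre quadric exactly at $\ket{a_0,b_0}$. A tangent line at $\ket{a_0,b_0}$ has the form $\mathrm{span}\{\ket{a_0,b_0},\alpha\ket{a_0,b_0^{\perp}}+\beta\ket{a_0^{\perp},b_0}\}$ with $\alpha\beta\neq0$, so the extra kernel direction is forced to an entangled superposition inside the tangent plane. The plan is to eliminate this case by combining $W_{\star}\in\mathbb{H}_+$ with $\supp\rho_j\subseteq\supp\rho_0$: the positivity constraints on $W_{\star}$ restricted to the tangent plane should force either $\ket{a_0,b_0^{\perp}}$ or $\ket{a_0^{\perp},b_0}$ to also lie in $\ker W_{\star}$, producing a product vector in $\supp\rho_0\cap\ker W_{\star}$ as required. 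Making this positivity-and-tangency argument precise is the step I expect to be the main difficulty.
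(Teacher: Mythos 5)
Your ``only if'' direction and the full-rank case of the converse are fine and match the paper's route through Theorems~\ref{thm:wps} and \ref{thm:lcj}. The genuine gap is in the ``moreover'', on which your rank-deficient case of the converse also leans. The missing ingredient is the spectral fact, cited in the paper from Augusiak \emph{et al.} and Sarbicki, that a two-qubit EW never has a zero eigenvalue. With it the paper disposes of everything in one line: if $\rho_{0}$ is rank-deficient then so is $q\rho_{0}-\eta_{j}\rho_{j}$ for every $q$, hence it is never an EW, hence by Theorem~\ref{thm:lcj} no $q$ satisfies $\pazocal{Q}_{j}(\mathscr{E})\leqslant q<\pazocal{C}_{j}(\mathscr{E})$, i.e.\ $\pazocal{Q}_{j}(\mathscr{E})=\pazocal{C}_{j}(\mathscr{E})$; the same fact secures $\Tr(\sigma\rho_{0})>0$ in the sufficiency direction, since $\Tr(\sigma\rho_{0})=0$ would force the EW to annihilate $\supp\sigma$. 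You never invoke this property, and the geometric construction you substitute for it cannot be completed.

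Concretely, the tangent configuration you flag as the ``main difficulty'' is not eliminable. Take $\rho_{0}=\tfrac{1}{3}(\ketbra{01}{01}+\ketbra{10}{10}+\ketbra{11}{11})$ and $\eta_{j}\rho_{j}=\tfrac{1}{3}\ketbra{\phi}{\phi}$ with $\ket{\phi}=\tfrac{1}{\sqrt{2}}(\ket{01}+\ket{10})$ (realized, e.g., by $n=2$, $\eta_{2}=\tfrac{2}{3}$, $\rho_{2}=\tfrac{1}{2}(\Pi-\ketbra{\phi}{\phi})$ with $\Pi$ the projector onto $\supp\rho_{0}$). Then $\pazocal{C}_{j}(\mathscr{E})=1$, $W_{\star}=\tfrac{1}{3}(\Pi-\ketbra{\phi}{\phi})\in\mathbb{H}_{+}$, and $\ker W_{\star}=\mathrm{span}\{\ket{00},\ket{\phi}\}$ is exactly your tangent line at $\ket{00}$ with $\alpha=\beta$. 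Neither $\ket{01}$ nor $\ket{10}$ lies in $\ker W_{\star}$, so the positivity-and-tangency step you hope to make precise is simply false. Worse for your strategy, in this example the only product vector in $\ker W_{\star}$ is $\ket{00}$, so no separable $M$ with $\Tr(\rho_{0}M)>0$ satisfies $\Tr(W_{\star}M)=0$: no exact two-outcome product measurement attains $\pazocal{C}_{j}(\mathscr{E})$, which is approached only in the limit of product vectors $(\cos\theta\ket{0}+\sin\theta\ket{1})^{\otimes 2}$ as $\theta\to0$. Any proof of the ``moreover'' that exhibits an exact optimal local measurement is therefore doomed; you need the full-rank property of two-qubit EWs (or an equivalent) rather than a product vector in $\ker W_{\star}\setminus\ker\rho_{0}$.
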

\begin{proof}
The necessity of the first statement has already been mentioned.
To prove the sufficiency of the first statement, let us assume that $q\rho_{0}-\eta_{j}\rho_{j}$ is a weakly-optimal EW. This assumption implies that Eq.~\eqref{eq:llcj} is satisfied and there exists a separable state $\sigma$ satisfying the first condition in Eq.~\eqref{eq:omc}. The second condition in Eq.~\eqref{eq:omc} is also satisfied otherwise the EW $q\rho_{0}-\eta_{j}\rho_{j}$ has zero eigenvalue\cite{augu2008,sarb2008}. 
Thus, Theorems~\ref{thm:wps} and \ref{thm:lcj} lead us to Eq.~\eqref{eq:lpq}.

Let us consider the case where $\rho_{0}$ is not full rank. In this case, $q\rho_{0}-\eta_{j}\rho_{j}$ is also not full rank due to the definition of $\rho_{0}$ in Eq.~\eqref{eq:ads}, and it is not an EW because every two-qubit EW do not have zero eigenvalue\cite{augu2008,sarb2008}.
Thus, the second statement is true.
\end{proof}

Now, we establish a systematical way in terms of EW to construct quantum state ensembles where some maximum confidence is nonlocal.
For an EW $W$, let us consider the quantum state ensemble $\mathscr{E}=\{\eta_{i},\rho_{i}\}_{i=1}^{2}$ consisting of two orthogonal states,
\begin{equation}\label{eq:ftsx}
\eta_{1}=\frac{\Tr W_{-}}{\Tr W_{+}+\Tr W_{-}},~~\rho_{1}=\frac{W_{-}}{\Tr W_{-}},~~
\eta_{2}=\frac{\Tr W_{+}}{\Tr W_{+}+\Tr W_{-}},~~\rho_{2}=\frac{W_{+}}{\Tr W_{+}}
\end{equation}
where $W_{+}$ and $W_{-}$ are the orthogonal positive-semidefinite operators satisfying
\begin{equation}\label{eq:abps}
W=W_{+}-W_{-}.
\end{equation}

For $q\in\mathbb{R}$, we have
\begin{equation}\label{eq:ewce}
q\rho_{0}-\eta_{1}\rho_{1}=\tfrac{qW_{+}+(q-1) W_{-}}{\Tr W_{+}+\Tr W_{-}}.
\end{equation}
Since $q\rho_{0}-\eta_{1}\rho_{1}$ is proportional to the EW $W$ for $q=\frac{1}{2}$, it follows from Theorem~\ref{thm:lcj} that
\begin{equation}\label{eq:loco}
\pazocal{Q}_{1}(\mathscr{E})\leqslant\tfrac{1}{2}<\pazocal{C}_{1}(\mathscr{E}).
\end{equation}
Thus, the maximum confidence of measurements to identify $\rho_{1}$ is nonlocal.
In this case, we can also easily see that $q=1$ is the minimum value satisfying $q\rho_{0}-\eta_{1}\rho_{1}\in\mathbb{H}_{+}$, so Eq.~\eqref{eq:cjeq} leads us to
\begin{equation}\label{eq:loc1}
\pazocal{C}_{1}(\mathscr{E})=1.
\end{equation}

Now, we provide another way to construct quantum state ensembles where every maximum confidence is nonlocal.
For a given set of EWs $\{W_{i}\}_{i=1}^{n}$, let us consider
\begin{equation}\label{eq:wdse}
W=\sum_{i=1}^{n}W_{i},
\end{equation}
and denote the smallest eigenvalue of $W$ as $\epsilon$. For the case when $\epsilon>0$, we can define
a quantum state ensemble $\mathscr{E}=\{\eta_{i},\rho_{i}\}_{i=1}^{n}$ with
\begin{equation}\label{eq:ewey}
\eta_{i}=\frac{\Tr(\lambda_{i}W-\epsilon W_{i})}{\Tr(\lambda W-\epsilon W)},~
\rho_{i}=\frac{\lambda_{i}W-\epsilon W_{i}}{\Tr(\lambda_{i}W-\epsilon W_{i})}
\end{equation}
where $\lambda_{i}$ is the largest eigenvalue of $W_{i}$ for each $i=1,\ldots,n$ and 
\begin{equation}\label{eq:lsli}
\lambda:=\sum_{i=1}^{n}\lambda_{i}.
\end{equation}

For $q\in\mathbb{R}$ and each $j=1,\ldots,n$, a straightforward calculation leads us to
\begin{equation}\label{eq:qrms}
q\rho_{0}-\eta_{j}\rho_{j}=\tfrac{[q(\lambda-\epsilon)-\lambda_{j}]W+\epsilon W_{j}}{\Tr(\lambda W+\epsilon W)}.
\end{equation}
Since $q\rho_{0}-\eta_{j}\rho_{j}$ is proportional to the EW $W_{j}$ for $q=\frac{\lambda_{j}}{\lambda-\epsilon}$, it follows from Eq.~\eqref{eq:qrms} together with Theorem~\ref{thm:lcj} that
\begin{equation}\label{eq:tfee}
\pazocal{Q}_{j}(\mathscr{E})\leqslant\tfrac{\lambda_{j}}{\lambda-\epsilon}<\pazocal{C}_{j}(\mathscr{E}).
\end{equation}
Thus, the maximum confidence of measurements to identify $\rho_{j}$ is nonlocal for every $j=1,\ldots,n$.

$W$ in Eq.~\eqref{eq:wdse} is a full-rank positive-semidefinite operator because its smallest eigenvalue $\epsilon$ is positive.
For each $j=1,\ldots,n$, let $\delta_{j}$ be the absolute value of the smallest negative eigenvalue of $\sqrt{W}^{-1}W_{j}\sqrt{W}^{-1}$.
In this case, we can also easily see that $q=\frac{\lambda_{j}+\epsilon\delta_{j}}{\lambda-\epsilon}$ is the minimum value satisfying that $q\rho_{0}-\eta_{j}\rho_{j}$ in Eq.~\eqref{eq:qrms} is positive semidefinite.
Thus, Eq.~\eqref{eq:cjeq} leads us to
\begin{equation}\label{eq:tfea}
\pazocal{C}_{j}(\mathscr{E})=\tfrac{\lambda_{j}+\epsilon\delta_{j}}{\lambda-\epsilon}.
\end{equation}

We also note that $\rho_{0}$ is proportional to $W$ in Eq.~\eqref{eq:wdse} which is full rank, therefore the second condition of Eq.~\eqref{eq:omc} in Theorem~\ref{thm:wps} is satisfied for every separable state $\sigma$.
Thus, for the case when $W_{j}$ is a weakly-optimal EW, it follows from Theorem~\ref{thm:wps} that
\begin{equation}\label{eq:ljeq}
\pazocal{Q}_{j}(\mathscr{E})=\tfrac{\lambda_{j}}{\lambda-\epsilon}.
\end{equation}

\section{Discussion}\label{sec:dis}
We have considered the discrimination of multipartite quantum states and established a specific relation between the properties of EW and quantum nonlocality inherent in the confidence of measurement. 
We have first provided the definition of the confidence of measurement as well as its useful properties for various types of multipartite measurement(Theorems~\ref{thm:lsm} and \ref{thm:sts}).
We have shown that nonlocal maximum confidence strongly depends on the existence of a certain type of EW(Theorem~\ref{thm:lcj} and Corollary~\ref{cor:lwo}).
We have further established conditions for an upper bound on maximum of locally-achievable confidence(Theorem~\ref{thm:wps}). 
Finally, we have provided a method in terms of EW to construct quantum state ensembles with nonlocal maximum confidence.

EW is also known to play an important role in multipartite quantum state discrimination, namely, minimum-error discrimination by separable measurements.
The minimum-error discrimination of an ensemble $\mathscr{E}$ in Eq.~\eqref{eq:ens} is to maximize the average probability of correctly guessing the prepared state from $\mathscr{E}$, that is,
\begin{equation}\label{eq:gupro}
p_{\mathrm{G}}(\mathscr{E})=\max_{\mathscr{M}}\sum_{i=1}^{n}\eta_{i}\Tr(\rho_{i}M_{i}),
\end{equation}
where the maximum is taken over all possible measurements $\mathscr{M}=\{M_{i}\}_{i=0}^{n}$\cite{hels1969}. When the available measurements are limited to separable measurements, the maximum success probability is denoted by
\begin{equation}\label{eq:psep}
p_{\mathrm{SEP}}(\mathscr{E})=\max_{\mathrm{Separable}\,\mathscr{M}}\sum_{i=1}^{n}\eta_{i}\Tr(\rho_{i}M_{i}).
\end{equation}
The minimum-error discrimination of $\mathscr{E}$ is realized by separable measurements if
\begin{equation}\label{eq:mebsp}
p_{\mathrm{SEP}}(\mathscr{E})=p_{\mathrm{G}}(\mathscr{E}).
\end{equation}

It was recently shown that the existence of specific type of EW guarantees the minimum-error discrimination of a quantum state ensemble by separable measurements\cite{ha20231}.
\begin{proposition}\label{prop:pspg}
For a multipartite quantum state ensemble $\mathscr{E}=\{\eta_{i},\rho_{i}\}_{i=1}^{n}$ and each $j=1,\ldots,n$, 
\begin{equation}\label{eq:pspg}
p_{\mathrm{SEP}}(\mathscr{E})=\eta_{j}<p_{\mathrm{G}}(\mathscr{E})
\end{equation}
if and only if $\eta_{j}\rho_{j}-\eta_{i}\rho_{i}$ is block positive for all $i=1,\ldots,n$
and there exists an EW in $\{\eta_{j}\rho_{j}-\eta_{i}\rho_{i}\}_{i=1}^{n}$.
\end{proposition}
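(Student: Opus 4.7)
The plan is to deduce the biconditional from the bookkeeping identity
\begin{equation}
\sum_{i=1}^{n}\eta_{i}\Tr(\rho_{i}M_{i})=\eta_{j}-\sum_{i=1}^{n}\Tr[(\eta_{j}\rho_{j}-\eta_{i}\rho_{i})M_{i}],
\end{equation}
which holds whenever $\sum_{i=1}^{n}M_{i}=\mathbbm{1}$ (and follows from $\Tr\rho_{j}=1$). Shifting any inconclusive component $M_{0}$ into $M_{j}$ only increases the objective and preserves separability, so we may assume $M_{0}=\mathbb{O}$ throughout. The trivial choice $M_{j}=\mathbbm{1}$, $M_{i}=\mathbb{O}$ otherwise is a separable measurement attaining $\eta_{j}$, so $p_{\mathrm{SEP}}(\mathscr{E})\geqslant\eta_{j}$ is automatic.

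For sufficiency, assume every $\eta_{j}\rho_{j}-\eta_{i}\rho_{i}\in\mathbb{SEP}^{*}$ and that at least one such operator is an EW. When each $M_{i}\in\mathbb{SEP}$, the identity makes every term of the subtracted sum nonnegative, so $\sum_{i}\eta_{i}\Tr(\rho_{i}M_{i})\leqslant\eta_{j}$ and hence $p_{\mathrm{SEP}}(\mathscr{E})=\eta_{j}$. To prove $p_{\mathrm{G}}(\mathscr{E})>\eta_{j}$, pick $k$ with $\eta_{j}\rho_{j}-\eta_{k}\rho_{k}\notin\mathbb{H}_{+}$ and a unit vector $\ket{\psi}$ satisfying $\bra{\psi}(\eta_{j}\rho_{j}-\eta_{k}\rho_{k})\ket{\psi}<0$; then $M_{k}=\ketbra{\psi}{\psi}$, $M_{j}=\mathbbm{1}-\ketbra{\psi}{\psi}$ (and $M_{i}=\mathbb{O}$ otherwise) is a valid measurement whose success probability equals $\eta_{j}-\bra{\psi}(\eta_{j}\rho_{j}-\eta_{k}\rho_{k})\ket{\psi}>\eta_{j}$.

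For necessity, suppose $p_{\mathrm{SEP}}(\mathscr{E})=\eta_{j}<p_{\mathrm{G}}(\mathscr{E})$. If some $\eta_{j}\rho_{j}-\eta_{k}\rho_{k}$ were not block positive, there would exist a product vector $\ket{\phi}$ with $\bra{\phi}(\eta_{j}\rho_{j}-\eta_{k}\rho_{k})\ket{\phi}<0$; completing $\ket{\phi}$ to a local product basis shows that both $M_{k}=\ketbra{\phi}{\phi}$ and $M_{j}=\mathbbm{1}-\ketbra{\phi}{\phi}$ are separable, and the identical computation as above would then produce a separable measurement with success probability strictly greater than $\eta_{j}$, contradicting $p_{\mathrm{SEP}}(\mathscr{E})=\eta_{j}$. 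Hence every $\eta_{j}\rho_{j}-\eta_{i}\rho_{i}$ is block positive. If none of them were an EW, all would lie in $\mathbb{H}_{+}$, and applying the identity to arbitrary measurements would give $p_{\mathrm{G}}(\mathscr{E})\leqslant\eta_{j}$, contradicting the strict inequality; so at least one is an EW.

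The main obstacle I anticipate is confirming that $\mathbbm{1}-\ketbra{\phi}{\phi}$ is separable for an arbitrary product vector $\ket{\phi}$, which requires the explicit product-basis decomposition obtained by completing each local factor of $\ket{\phi}$ to an orthonormal basis. Once that is in place, the rest is a clean consequence of the single bookkeeping identity above together with the duality between $\mathbb{SEP}$ and $\mathbb{SEP}^{*}$ already used in the proofs of Theorems~\ref{thm:sts} and \ref{thm:lcj}.
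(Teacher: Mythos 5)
Your proof is correct. One caveat before comparing: the paper does not actually prove Proposition~\ref{prop:pspg}; it imports it from Ref.~\cite{ha20231}, so there is no in-paper argument to measure your proposal against, and what you have written is in effect a self-contained replacement for the citation. Your argument checks out at every step. The bookkeeping identity $\sum_{i=1}^{n}\eta_{i}\Tr(\rho_{i}M_{i})=\eta_{j}-\sum_{i=1}^{n}\Tr[(\eta_{j}\rho_{j}-\eta_{i}\rho_{i})M_{i}]$ is valid whenever $\sum_{i=1}^{n}M_{i}=\mathbbm{1}$, the reduction to $M_{0}=\mathbb{O}$ is harmless (absorbing $M_{0}$ into $M_{j}$ preserves separability and cannot decrease the objective), and both witness measurements are legitimate. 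Three small points deserve to be made explicit rather than left implicit: (i) failure of block positivity is witnessed by a \emph{pure product} vector, because every element of $\mathbb{SEP}$ is a conic combination of pure product projectors (spectrally decompose each factor in Definition~\ref{def:sep}); (ii) the offending index $k$ is automatically distinct from $j$, since $\eta_{j}\rho_{j}-\eta_{j}\rho_{j}=\mathbb{O}$ lies in both $\mathbb{H}_{+}$ and $\mathbb{SEP}^{*}$, so the two-outcome measurements you build are well defined; and (iii) the separability of $\mathbbm{1}-\ketbra{\phi}{\phi}$, which you correctly obtain by completing each local factor of $\ket{\phi}$ to an orthonormal basis --- this is exactly the content of Lemma~\ref{lem:tos}, which even upgrades the measurement to LOCC, so you could simply cite it. Methodologically your route is entirely elementary: it needs only the single identity together with the definitions of $\mathbb{SEP}$, $\mathbb{SEP}^{*}$ and $\mathbb{H}_{+}$, and in particular avoids the separating-hyperplane machinery the paper deploys for Theorem~\ref{thm:sts}; the price is nothing here, since the upper bounds $p_{\mathrm{SEP}}(\mathscr{E})\leqslant\eta_{j}$ and $p_{\mathrm{G}}(\mathscr{E})\leqslant\eta_{j}$ come for free from the identity and no attainability-of-optimum argument is required.
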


Here, we further provide the mutual implication between nonlocal maximum confidence and minimum-error discrimination by separable measurements in discriminating multipartite quantum states.
\begin{theorem}\label{thm:lumm}
For a multipartite quantum state ensemble $\mathscr{E}=\{\eta_{i},\rho_{i}\}_{i=1}^{n}$ and each $j=1,\ldots,n$, the maximum confidence of measurements to identify $\rho_{j}$ in $\mathscr{E}$ is nonlocal, that is, 
\begin{equation}\label{eq:dqcj}
\pazocal{Q}_{j}(\mathscr{E})<\pazocal{C}_{j}(\mathscr{E})
\end{equation}
if and only if there exists a real number $r\in(0,1)$ satisfying
\begin{equation}\label{eq:psgr}
p_{\mathrm{SEP}}(\mathscr{E}_{j}^{r})=r<p_{\mathrm{G}}(\mathscr{E}_{j}^{r})
\end{equation}
where 
\begin{equation}\label{eq:ejr}
\mathscr{E}_{j}^{r}=\{r,\rho_{0};1-r,\rho_{j}\}
\end{equation}
is the two-state ensemble consisting of the states
$\rho_{0}$ and $\rho_{j}$ with the corresponding probabilities $r$ and $1-r$, respectively.
\end{theorem}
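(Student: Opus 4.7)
The plan is to collapse the statement to the already-proved Proposition~\ref{prop:pspg} and Theorem~\ref{thm:lcj} by means of a one-parameter change of variables relating the probability $r$ of the two-state ensemble to the scalar $q$ appearing in the EW characterization of nonlocal maximum confidence.

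First, I would apply Proposition~\ref{prop:pspg} to $\mathscr{E}_{j}^{r}=\{r,\rho_{0};\,1-r,\rho_{j}\}$, taking the distinguished index of the proposition to be the one corresponding to $\rho_{0}$ with probability $r$. The family $\{\eta_{j}\rho_{j}-\eta_{i}\rho_{i}\}_{i=1}^{n}$ appearing there then reduces to $\{\mathbb{O},\,r\rho_{0}-(1-r)\rho_{j}\}$. Because $\mathbb{O}$ is block positive but cannot be an EW, the two clauses of the proposition merge into the single requirement that $r\rho_{0}-(1-r)\rho_{j}$ is an EW. Consequently, the right-hand side of Theorem~\ref{thm:lumm} is equivalent to: there exists $r\in(0,1)$ such that $r\rho_{0}-(1-r)\rho_{j}$ is an EW.

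Second, I would use the factorization
\begin{equation*}
r\rho_{0}-(1-r)\rho_{j}=\tfrac{1-r}{\eta_{j}}\bigl[q\rho_{0}-\eta_{j}\rho_{j}\bigr],\qquad q=\tfrac{r\eta_{j}}{1-r},
\end{equation*}
to translate this into an equivalent statement about $q$. The map $r\mapsto q$ is a strictly increasing bijection from $(0,1)$ onto $(0,\infty)$, and multiplication by a positive scalar preserves the EW property, so the previous condition is equivalent to the existence of some $q\in(0,\infty)$ for which $q\rho_{0}-\eta_{j}\rho_{j}$ is an EW.

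Finally, Theorem~\ref{thm:lcj} describes the set of such $q$ as the half-open interval $[\pazocal{Q}_{j}(\mathscr{E}),\pazocal{C}_{j}(\mathscr{E}))$, which is nonempty exactly when $\pazocal{Q}_{j}(\mathscr{E})<\pazocal{C}_{j}(\mathscr{E})$. By Inequality~\eqref{eq:lsc} together with Theorems~\ref{thm:lsm} and \ref{thm:sts} we have $\pazocal{Q}_{j}(\mathscr{E})=\pazocal{S}_{j}(\mathscr{E})\geqslant\pazocal{L}_{j}(\mathscr{E})>0$, so every admissible $q$ is automatically positive and lies in the image of the bijection. Chaining the three equivalences then yields Theorem~\ref{thm:lumm}. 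The only step that I expect to require care — rather than presenting a genuine obstacle — is the reduction of Proposition~\ref{prop:pspg} to a single EW condition for the two-state ensemble, namely noting that the zero operator in the family is block positive but not an EW, so that the ``existence of an EW'' clause is forced onto $r\rho_{0}-(1-r)\rho_{j}$ itself; the remainder is algebraic.
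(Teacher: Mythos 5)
Your proposal is correct and follows essentially the same route as the paper's proof: both reduce the statement to Proposition~\ref{prop:pspg} and Theorem~\ref{thm:lcj} via the change of variables $q=r\eta_{j}/(1-r)$ (equivalently $r=q/(q+\eta_{j})$). The only difference is cosmetic --- you spell out explicitly why Proposition~\ref{prop:pspg} applied to the two-state ensemble collapses to the single condition that $r\rho_{0}-(1-r)\rho_{j}$ be an EW (since $\mathbb{O}$ is block positive but not an EW), a step the paper leaves implicit.
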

\begin{proof}
Let us suppose Inequality~\eqref{eq:dqcj} holds and $q$ is a real number satisfying $\pazocal{Q}_{j}(\mathscr{E})\leqslant q<\pazocal{C}_{j}(\mathscr{E})$.
Due to Theorem~\ref{thm:lcj}, $q\rho_{0}-\eta_{j}\rho_{j}$ is an EW, therefore it follows from Proposition~\ref{prop:pspg} that Eq.~\eqref{eq:psgr} holds for $r=\frac{q}{q+\eta_{j}}$.

Conversely, assume that Eq.~\eqref{eq:psgr} is satisfied for some $r\in(0,1)$. 
From Proposition~\ref{prop:pspg}, we can see that $q\rho_{0}-\eta_{j}\rho_{j}$ is an EW for $q=r\eta_{j}/(1-r)$. 
Thus, Theorem~\ref{thm:lcj} leads us to Inequality~\eqref{eq:dqcj}.
\end{proof}

In the minimum-error discrimination of a multipartite quantum state ensemble $\mathscr{E}=\{\eta_{i},\rho_{i}\}_{i=1}^{n}$, quantum nonlocality occurs when the optimal success probability $p_{\mathrm{G}}(\mathscr{E})$ in Eq.~\eqref{eq:gupro} cannot be achieved by LOCC measurements, that is,
\begin{equation}\label{eq:plpg}
p_{\mathrm{L}}(\mathscr{E})< p_{\mathrm{G}}(\mathscr{E})
\end{equation}
where $p_{\mathrm{L}}(\mathscr{E})$ is the maximum of success probability that can be obtained using LOCC measurements, that is,
\begin{equation}\label{eq:ple}
p_{\mathrm{L}}(\mathscr{E})=\max_{\mathrm{LOCC}\,\mathscr{M}}\sum_{i=1}^{n}\eta_{i}\Tr(\rho_{i}M_{i}).
\end{equation}
Since the definitions of $p_{\mathrm{SEP}}(\mathscr{E})$ and $p_{\mathrm{L}}(\mathscr{E})$ in Eqs.~\eqref{eq:psep} and \eqref{eq:ple} imply $p_{\mathrm{L}}(\mathscr{E})\leqslant p_{\mathrm{SEP}}(\mathscr{E})$, Inequality~\eqref{eq:plpg} holds if $p_{\mathrm{SEP}}(\mathscr{E})< p_{\mathrm{G}}(\mathscr{E})$.
Due to this fact, Eq.~\eqref{eq:psgr} of Theorem~\ref{thm:lumm} guarantees $p_{\mathrm{L}}(\mathscr{E}_{j}^{r})<p_{\mathrm{G}}(\mathscr{E}_{j}^{r})$.
In other words, Theorem~\ref{thm:lumm} tells us that the nonlocality in the confidence of measurement characterized by Inequality~\eqref{eq:dqcj} always guarantees the nonlocality arising in the minimum-error discrimination of quantum states in Eq.~\eqref{eq:psgr}.
Thus, we have established a directly connection between distinct types of nonlocality arising from multipartite quantum state discrimination.

As our results provide a specific relation between EW and nonlocal maximum confidence, it is natural to investigate the relationship between EW and quantum nonlocality arising in \emph{optimal} maximum-confidence discrimination which is to maximize the success probability of discrimination over all possible maximum-confidence measurements.
It is also a good future work to investigate the possible implications and applications of nonlocality in the confidence of measurement such as data hiding or secret sharing to construct secure information network.

\section{Methods}\label{sec:mtd}
For a multipartite Hilbert spaces $\mathscr{H}=\bigotimes_{k=1}^{m}\mathbb{C}^{d_{k}}$ with positive integers $m\geqslant2$ and $d_{1},\ldots,d_{m}$, let us denote by $\mathbb{H}$ the set of all Hermitian operators acting on $\mathscr{H}$.
We also denote the set of all positive-semidefinite operators in $\mathbb{H}$ by
\begin{equation}\label{eq:hplu}
\mathbb{H}_{+}=\{E\in\mathbb{H}\,|\,\bra{v}E\ket{v}\geqslant0~~\forall\ket{v}\in\mathscr{H}\}.
\end{equation}

A multipartite quantum state is described by $\rho\in\mathbb{H}_{+}$ with $\Tr\rho=1$ and a measurement is expressed by $\{M_{i}\}_{i}\subseteq\mathbb{H}_{+}$ satisfying $\sum_{i}M_{i}=\mathbbm{1}$ where $\mathbbm{1}$ is the identity operator in $\mathbb{H}$.
When $\rho$ is measured in $\{M_{i}\}_{i}$,
$M_{i}$ is detected with the probability $\Tr(\rho M_{i})$.

\begin{definition}\label{def:sep}
$E\in\mathbb{H}_{+}$ is called \emph{separable} if it can be described by a conic combination of product states, that is,
\begin{equation}\label{eq:sepo}
E=\sum_{l}p_{l}\bigotimes_{k=1}^{m}\sigma_{l}^{(k)}
\end{equation}
where $\{p_{l}\}_{l}$ is a set of nonnegative real numbers and
$\{\sigma_{l}^{(k)}\}_{l}$ is a set of states acting on $\mathbb{C}^{d_{k}}$ for each $k=1,\ldots,m$.
\end{definition}
\noindent We denote the set of all separable operators in $\mathbb{H}_{+}$ by
\begin{equation}\label{eq:sepd}
\mathbb{SEP}=\{E\in\mathbb{H}_{+}\,|\,E:\mathrm{separable}\}.
\end{equation}

A measurement $\{M_{i}\}_{i}$ is called a \emph{separable measurement} if 
$M_{i}\in\mathbb{SEP}$ for all $i$. 
We also say that a measurement is a \emph{LOCC measurement} if it can be realized by LOCC.
Note that every LOCC measurement is a separable measurement\cite{chit2014}.

The following lemma provide a way to obtain a LOCC measurement from a separable state.
\begin{lemma}\label{lem:tos}
For $E\in\mathbb{SEP}$ with $\Tr E=1$, the measurement $\{E,\mathbbm{1}-E\}$ is a LOCC measurement.
\end{lemma}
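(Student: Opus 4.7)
The plan is to produce an explicit LOCC protocol that realises the two-outcome POVM $\{E,\mathbbm{1}-E\}$, using the separable decomposition of $E$ as the engine.

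First, I would exploit the assumption $E\in\mathbb{SEP}$ together with $\Tr E=1$ to write
\begin{equation}
E=\sum_{l}p_{l}\bigotimes_{k=1}^{m}\sigma_{l}^{(k)},
\end{equation}
where, after absorbing scalar factors into the weights, each $\sigma_{l}^{(k)}$ is a normalised state on $\mathbb{C}^{d_{k}}$ and $\{p_{l}\}_{l}$ is a probability distribution (its total weight must equal $1$ because tracing both sides gives $\sum_{l}p_{l}=\Tr E=1$). This last point is the small but crucial normalisation step that makes $p_{l}$ usable as sampling probabilities.

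Second, I would describe the following explicit LOCC procedure. One designated party, say party $1$, samples an index $l$ with probability $p_{l}$ (a purely classical local operation) and broadcasts $l$ to every other party over the classical channel. Conditioned on $l$, each party $k$ performs the local two-outcome measurement $\{\sigma_{l}^{(k)},\,\mathbbm{1}_{d_{k}}-\sigma_{l}^{(k)}\}$ on its own system; this is a legitimate POVM because any density operator $\sigma_{l}^{(k)}$ has eigenvalues in $[0,1]$ and therefore satisfies $0\leqslant\sigma_{l}^{(k)}\leqslant\mathbbm{1}_{d_{k}}$. Each party then reports its outcome bit classically to party $1$, who declares the global outcome ``$E$'' if and only if every party reported the first outcome, and ``$\mathbbm{1}-E$'' otherwise. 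By construction this is an LOCC protocol in the sense of the paper.

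Third, I would verify that the effective POVM of the procedure is indeed $\{E,\mathbbm{1}-E\}$. Conditioned on the sampled $l$, the POVM element associated with ``all parties saw the first outcome'' is exactly $\bigotimes_{k=1}^{m}\sigma_{l}^{(k)}$; averaging over the classical randomness in $l$ yields $\sum_{l}p_{l}\bigotimes_{k}\sigma_{l}^{(k)}=E$, and the complementary effect is $\mathbbm{1}-E$. I do not expect any serious obstacle here: the only thing one must be attentive to is the normalisation of the convex decomposition and the fact that density operators automatically give valid two-outcome local POVMs, both of which are direct consequences of Definition~\ref{def:sep} and $\Tr E=1$.
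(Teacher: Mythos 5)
Your proposal is correct and follows essentially the same route as the paper's own proof: use the separable decomposition of $E$ with $\Tr E=1$ to obtain a probability distribution $\{p_{l}\}_{l}$, sample $l$ classically, have each party measure $\{\sigma_{l}^{(k)},\mathbbm{1}_{k}-\sigma_{l}^{(k)}\}$ locally, and coarse-grain the outcomes to realise $\{E,\mathbbm{1}-E\}$. No gaps.
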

\begin{proof}
Since $E\in\mathbb{SEP}$, $E$ can be described by the form in Eq.~\eqref{eq:sepo}.
In this case, $\{p_{l}\}_{l}$ becomes a probability distribution due to $\Tr E=1$.
Let us consider the following LOCC process;
with the probability $p_{l}$ for each $l$, the $1$st,$\ldots$,$m$th parties perform the local measurements $\{\sigma_{l}^{(1)},\mathbbm{1}_{1}-\sigma_{l}^{(1)}\},\ldots,\{\sigma_{l}^{(m)},\mathbbm{1}_{m}-\sigma_{l}^{(m)}\}$, respectively, where $\mathbbm{1}_{k}$ is the identity operator acting on $\mathbb{C}^{d_{k}}$ for each $k=1,\ldots,m$.
The parties then use classical communication to answer ``0'' if the measurement result is $\bigotimes_{k=1}^{m}\sigma_{l}^{(k)}$, and ``1'' otherwise.
This process with two outcomes ``0'' and ``1'' is equivalent to the measurement $\{M_{0}=E,M_{1}=\mathbbm{1}-E\}$.
Therefore, the measurement $\{E,\mathbbm{1}-E\}$ is a LOCC measurement.
\end{proof}

\begin{definition}\label{def:bpos}
$E\in\mathbb{H}$ is called \emph{block positive} if
\begin{equation}\label{eq:bpd}
\Tr(EF)\geqslant0
\end{equation}
for all $F\in\mathbb{SEP}$.
\end{definition}
\noindent We denote by $\mathbb{SEP}^{*}$ the set of all block-positive operators in $\mathbb{H}$, that is,
\begin{equation}\label{eq:seps}
\mathbb{SEP}^{*}=\{E\in\mathbb{H}\,|\,E:\mathrm{block~positive}\}.
\end{equation}
$\mathbb{SEP}^{*}$ is the dual cone of $\mathbb{SEP}$, but $\mathbb{SEP}$ is also the dual cone of $\mathbb{SEP}^{*}$ because $\mathbb{SEP}$ is convex and closed\cite{boyd2004}, that is,
\begin{equation}\label{eq:sds}
\{E\in\mathbb{H}\,|\,\Tr(EF)\geqslant0~\forall F\in\mathbb{SEP}^{*}\}=\mathbb{SEP}.
\end{equation}

We note that
\begin{equation}\label{eq:subs}
\mathbb{SEP}\subseteq\mathbb{H}_{+}\subseteq\mathbb{SEP}^{*}\subseteq\mathbb{H}
\end{equation}
where the first and last inclusions are from the definitions of $\mathbb{SEP}$ and $\mathbb{SEP}^{*}$, respectively, and the second inclusion is from the fact that $\Tr(EF)\geqslant0$ for all $E\in\mathbb{H}_{+}$ and all $F\in\mathbb{SEP}$. We also note that
\begin{equation}\label{eq:twz}
\Tr E>0
\end{equation}
for any $E\in\mathbb{SEP}^{*}$ with $E\neq\mathbb{O}$ where $\mathbb{O}$ is the zero operator in $\mathbb{H}$\cite{ha20222}.

\begin{definition}\label{def:enwt}
$W\in\mathbb{H}$ is called an \emph{EW} if $\Tr(WE)\geqslant0$ for all $E\in\mathbb{SEP}$ but $\Tr(WF)<0$ for some $F\in\mathbb{H}_{+}\setminus\mathbb{SEP}$, or equivalently,
\begin{equation}\label{eq:ewd}
W\in\mathbb{SEP}^{*}\setminus\mathbb{H}_{+}.
\end{equation}
\end{definition}

An EW $W$ is called \emph{optimal} if there is no other EW detecting more entangled states than $W$ does; in other words, there does not exist $W'\in\mathbb{SEP}^{*}\setminus\mathbb{H}_{+}$ satisfying
$\Tr(W'E)<0$ for all $E\in\mathbb{H}_{+}\setminus\mathbb{SEP}$ with $\Tr(WE)<0$ and $\Tr(W'F)<0$ for some $F\in\mathbb{H}_{+}\setminus\mathbb{SEP}$ with $\Tr(WF)\geqslant 0$\cite{lewe2000}.
An EW $W$ is called \emph{weakly optimal} if there exists a separable state $\sigma$ satisfying
\begin{equation}\label{eq:woc}
\Tr(\sigma W)=0.
\end{equation}
We note that weak optimality is a necessary but not sufficient condition for an EW to be optimal\cite{badz2013,chru2014}.

\begin{proof}[Proof of Theorem~\ref{thm:sts}]
As we already have $\pazocal{S}_{j}(\mathscr{E})\leqslant\pazocal{Q}_{j}(\mathscr{E})$, it is enough to show that
\begin{equation}\label{eq:sjts}
\pazocal{Q}_{j}(\mathscr{E})\leqslant\pazocal{S}_{j}(\mathscr{E}).
\end{equation}
Here, we use $\supp(E)$ to denote the support of $E\in\mathbb{H}$\cite{supp}.

Let us consider the set,
\begin{equation}\label{eq:osc}
\Omega_{j}(\mathscr{E})=\big\{\big(1-\Tr E,\,\Tr[\sqrt{\rho_{0}}^{-1}\eta_{j}\rho_{j}\sqrt{\rho_{0}}^{-1}E]-p,\,
E-\sqrt{\rho_{0}}M\sqrt{\rho_{0}}\big)\in\mathbb{R}^{2}\times\mathbb{H}^{\circ}\,\big|\,
p>\pazocal{S}_{j}(\mathscr{E}),~E\in\mathbb{H}_{+}^{\circ},~M\in\mathbb{SEP}\big\}
\end{equation}
where 
\begin{equation}\label{eq:hhpd}
\mathbb{H}^{\circ}=\{E\in\mathbb{H}\,|\,\supp(E)\subseteq\supp(\rho_{0})\},~~
\mathbb{H}_{+}^{\circ}=\{E\in\mathbb{H}_{+}\,|\,\supp(E)\subseteq\supp(\rho_{0})\}.\end{equation}
Due to the convexity of $\mathbb{H}_{+}^{\circ}$ and $\mathbb{SEP}$, $\Omega_{j}(\mathscr{E})$ is a convex set.
Moreover, $\Omega_{j}(\mathscr{E})$ does not have the origin $(0,0,\mathbb{O})$ of $\mathbb{R}^{2}\times\mathbb{H}^{\circ}$, otherwise there exists $M\in\mathbb{SEP}$ satisfying
\begin{equation}\label{eq:rmjs}
\Tr(\rho_{0} M)=1,~~\eta_{j}\Tr(\rho_{j}M)>\pazocal{S}_{j}(\mathscr{E}),
\end{equation}
which contradict Theorem~\ref{thm:lsm}.

The Cartesian product $\mathbb{R}^{2}\times\mathbb{H}^{\circ}$ can be considered 
as a real vector space with an inner product defined as
\begin{equation}\label{eq:inrs}
\langle (a,a',A),(b,b',B)\rangle=ab+a'b'+\Tr(AB)
\end{equation}
for $(a,a',A),(b,b',B)\in\mathbb{R}^{2}\times\mathbb{H}^{\circ}$.
Since the set $\Omega_{j}(\mathscr{E})$ in Eq.~\eqref{eq:osc} and the single-element set $\{(0,0,\mathbb{O})\}$ are disjoint convex sets, it follows from the \emph{separating hyperplane theorem}\cite{boyd2004,sht} that there exists $(\gamma_{1},\gamma_{2},\Gamma)\in\mathbb{R}^{2}\times\mathbb{H}^{\circ}$ satisfying
\begin{eqnarray}
&&(\gamma_{1},\gamma_{2},\Gamma)\neq (0,0,\mathbb{O}),\label{eq:oino}\\
&&\langle(\gamma_{1},\gamma_{2},\Gamma),(r,r',R)\rangle\leqslant 0\label{eq:shyr}
\end{eqnarray}
for all $(r,r',R)\in\Omega_{j}(\mathscr{E})$.

Suppose
\begin{eqnarray}
&&\gamma_{1}\leqslant\gamma_{2}\pazocal{S}_{j}(\mathscr{E}),\label{eq:gg12}\\
&&\gamma_{1}\rho_{0} -\gamma_{2}\eta_{j}\rho_{j}\in\mathbb{SEP}^{*},\label{eq:sbgo}\\
&&\gamma_{2}>0.\label{eq:pgc2}
\end{eqnarray}
From Conditions~\eqref{eq:sbgo} and \eqref{eq:pgc2}, the real number $q=\gamma_{1}/\gamma_{2}$ satisfies
\begin{equation}
q\rho_{0}-\eta_{j}\rho_{j}\in\mathbb{SEP}^{*}.
\end{equation}
Thus, Inequality~\eqref{eq:sjts} holds because
\begin{equation}
\pazocal{Q}_{j}(\mathscr{E})\leqslant 
q=\frac{\gamma_{1}}{\gamma_{2}}\leqslant \pazocal{S}_{j}(\mathscr{E}),
\end{equation}
where the first inequality is from the definition of $\pazocal{Q}_{j}(\mathscr{E})$ in Eq.~\eqref{eq:tsie} and the second inequality is due Conditions~\eqref{eq:gg12} and \eqref{eq:pgc2}.
The rest of this proof is to prove Conditions~\eqref{eq:gg12}, \eqref{eq:sbgo}, and \eqref{eq:pgc2}.
\end{proof}
\begin{proof}[Proof of \eqref{eq:gg12}]
Inequality~\eqref{eq:shyr} can be rewritten as
\begin{equation}\label{eq:ggtt}
\gamma_{1}-\gamma_{2} p \leqslant 
\Tr\big[(\gamma_{1}\mathbbm{1}-\gamma_{2}\sqrt{\rho_{0}}^{-1}\eta_{j}\rho_{j}\sqrt{\rho_{0}}^{-1}-\Gamma)E\big]
+\Tr(\sqrt{\rho_{0}}\Gamma\sqrt{\rho_{0}}M)
\end{equation}
for all $p>\pazocal{S}_{j}(\mathscr{E})$, all $E\in\mathbb{H}_{+}^{\circ}$ and all $M\in\mathbb{SEP}$.
If $E=M=\mathbb{O}$, Inequality~\eqref{eq:ggtt} becomes Inequality~\eqref{eq:gg12} by taking the limit of $p$ to $\pazocal{S}_{j}(\mathscr{E})$.
\end{proof}
\begin{proof}[Proof of \eqref{eq:sbgo}]
For any $M\in\mathbb{SEP}$, Inequality~\eqref{eq:ggtt} becomes 
\begin{equation}\label{eq:gmgi}
\gamma_{1}-\gamma_{2}\pazocal{S}_{j}(\mathscr{E})\leqslant
\Tr[(\gamma_{1}\rho_{0}-\gamma_{2}\eta_{j}\rho_{j})M]
\end{equation}
by fixing $E=\sqrt{\rho_{0}}M\sqrt{\rho_{0}}$ and taking the limit of $p$ to $\pazocal{S}_{j}(\mathscr{E})$.

Assume that
\begin{equation}\label{eq:grmg}
\gamma_{1}\rho_{0}-\gamma_{2}\eta_{j}\rho_{j}\notin\mathbb{SEP}^{*}.
\end{equation}
This assumption implies that there exists $M'\in\mathbb{SEP}$ satisfying
\begin{equation}\label{eq:etss}
\Tr[(\gamma_{1}\rho_{0}-\gamma_{2}\eta_{j}\rho_{j})M']<0.
\end{equation}

Thus, $M=tM'$ for $t>0$ satisfies Inequality~\eqref{eq:gmgi}, that is,
\begin{equation}\label{eq:gm12}
\gamma_{1}-\gamma_{2}\pazocal{S}_{j}(\mathscr{E})\leqslant
t \Tr\Big[(\gamma_{1}\rho_{0}-\gamma_{2}\eta_{j}\rho_{j})M'\Big].
\end{equation}
Since the above inequality is true for arbitrary large $t>0$, $\gamma_{1}-\gamma_{2}\pazocal{S}_{j}(\mathscr{E})$ can also be arbitrary small.
However, this contradicts that $\gamma_{1}$, $\gamma_{2}$ and $\pazocal{S}_{j}(\mathscr{E})$ are bounded.
Thus, Inclusion~\eqref{eq:sbgo} is satisfied.
\end{proof}
\begin{proof}[Proof of \eqref{eq:pgc2}]
To show $\gamma_{2}\geqslant0$, we assume $\gamma_{2}<0$.
If $E=M=\mathbb{O}$, Inequality~\eqref{eq:ggtt} becomes
\begin{equation}\label{eq:gli1}
\gamma_{1}\leqslant -\infty
\end{equation}
by taking the limit of $p$ to $\infty$.
This contradicts that $\gamma_{1}$ is bounded.
Thus, we have
\begin{equation}\label{eq:ggsz}
\gamma_{2}\geqslant0.
\end{equation}

Now, let us assume $\gamma_{2}=0$.
In this case, Inequality~\eqref{eq:gg12} and Inclusion~\eqref{eq:sbgo} become
\begin{equation}\label{eq:grim}
\gamma_{1}\leqslant 0,~\gamma_{1}\rho_{0}\in\mathbb{SEP}^{*}.
\end{equation}
From Eq.~\eqref{eq:grim} together with Inequality~\eqref{eq:twz}, we have
\begin{equation}\label{eq:gaz1}
\gamma_{1}=\Tr(\gamma_{1}\rho_{0})=0.
\end{equation}

For any $E\in\mathbb{H}_{+}^{\circ}$, it follows from $\gamma_{1}=\gamma_{2}=0$ that Inequality~\eqref{eq:ggtt} becomes
\begin{equation}\label{eq:mgeg}
\Tr(-\Gamma E)\geqslant0
\end{equation}
by fixing $M=\mathbb{O}$.  
From Inequality~\eqref{eq:mgeg} and $\Gamma\in\mathbb{H}^{\circ}$, we have
\begin{equation}\label{eq:mghc}
-\Gamma\in\mathbb{H}_{+}^{\circ}
\end{equation}
which implies
\begin{equation}\label{eq:trgp}
\Tr(\sqrt{\rho_{0}}\Gamma\sqrt{\rho_{0}})\leqslant0.
\end{equation}

For any $M\in\mathbb{SEP}$, Inequality~\eqref{eq:ggtt} becomes
\begin{equation}\label{eq:srgm}
\Tr\big(\sqrt{\rho_{0}}\Gamma\sqrt{\rho_{0}}M\big)\geqslant 0
\end{equation}
by fixing $E=\mathbb{O}$.  
This inequality implies
\begin{equation}\label{eq:srgr}
\sqrt{\rho_{0}}\Gamma\sqrt{\rho_{0}}\in\mathbb{SEP}^{*}.
\end{equation}
From Inequality~\eqref{eq:trgp} and Inclusion~\eqref{eq:srgr} together with Inequality~\eqref{eq:twz}, we have
\begin{equation}\label{eq:rgrm}
\sqrt{\rho_{0}}\Gamma\sqrt{\rho_{0}}=\mathbb{O}
\end{equation}
which means 
\begin{equation}\label{eq:gmbo}
\Gamma=\mathbb{O}. 
\end{equation}
Since $(\gamma_{1},\gamma_{2},\Gamma)=(0,0,\mathbb{O})$ contradicts Condition~\eqref{eq:oino},
Inequality~\eqref{eq:pgc2} holds.
\end{proof}

\section*{Acknowledgments}
This work was supported by a National Research Foundation of Korea(NRF) grant funded by the Korean government(Ministry of Science and ICT)(No.NRF2023R1A2C1007039).


\end{document}